\documentclass{amsart}
\usepackage{amsmath}
\usepackage{amsfonts}
\usepackage{amssymb}
\usepackage{xcolor}
\usepackage{graphicx}
\vfuzz2pt 
\hfuzz2pt 

\newtheorem{Theorem}{Theorem}[section]

\newtheorem{Proposition}[Theorem]{Proposition}

\newtheorem{Definition}[Theorem]{Definition}
\newtheorem{rem}[Theorem]{Remark}


\newcommand{\R}{\mathbb R}
\newcommand{\K}{\mathbb{K}}

\newcommand{\N}{\mathbb N}
\newcommand{\Z}{\mathbb Z}

\newcommand{\C}{\mathbb{C}}
\newcommand{\fK}{\mathbb{K}}

\newcommand{\D}{\mathcal{D}}

\newcommand{\mbH}{\mathbb{H}}
\title[Generalized KP hierarchies and ZS-equations]{On $(t_2,t_3)-$Zakharov-Shabat equations of generalized 
Kadomtsev-Petviashvili hierarchies}

\author{Jean-Pierre Magnot$^*$}
\author{Enrique G. Reyes $^{\dagger}$}
\author{Vladimir Rubtsov $^{\star}$}

\address{$^*$  Univ. Angers, CNRS, LAREMA, SFR MATHSTIC, F-49000 Angers, France
\\ and \\  Lyc\'ee Jeanne d'Arc, \\ Avenue de Grande Bretagne, \\ 63000 Clermont-Ferrand, France}
\address{$^\dagger$ Departamento de Matem\'{a}tica y Ciencia de la Computaci\'{o}n,
	Universidad de Santiago de Chile (USACH), Casilla 307 Correo 2, Santiago,
	Chile}
\address{$^\star$ Univ. Angers, CNRS, LAREMA, SFR MATHSTIC, F-49000 Angers, France}

\email{$^*:$ magnot@math.cnrs.fr}
\email{$^\dagger:$ enrique.reyes@usach.cl}
\email{$^\star:$ volodya@univ-angers.fr}
\begin{document}

\begin{abstract}
We review the integration of the KP hierarchy in several non-standard contexts. Specifically, we consider KP in the following associative differential algebras: an algebra equipped with a nilpotent derivation; an algebra of functions equipped with a derivation that generalizes the gradient operator; an algebra of quaternion-valued functions; a differential Lie algebra; an algebra of polynomials equipped with the Pincherle differential; a Moyal algebra. In all these cases we can formulate and solve the Cauchy problem of the KP hierarchy. Also, in each of these cases we derive different Zakharov-Shabat $(t_2,t_3)$-equations ---that is, different Kadomtsev-Petviashvili equations--- and we prove existence of solutions arising from solutions to the corresponding KP hierarchy.
\end{abstract}

\maketitle
\vskip 12pt
\textit{Keywords:} differential associative algebra, formal pseudo-differential operators, Kadomtsev-Petviashvili hierarchy

\textit{MSC (2020): 37K10 , 37K20 , 37K30 , 47G30} 

\section{Introduction}
The Kadomtsev-Petviashvili (KP) hierarchy is an integrable system in an infinite number of 
independent variables $(t_1,t_2,..)$ that contains numerous integrable equations in two and three independent variables. Among them, we mention the Boussinesq equation, the KP equation, the KdV equations and the Gelfand-Dickey flows. Usually, these equations are derived from the equations of the KP hierarchy posed on formal pseudodifferential operators $\Psi DO(A)$ with coefficients in the algebra of smooth periodic functions 
$A = C^\infty(S^1,\R)$, or in the algebra of rapidly decaying functions on the line, or in the 
Gelfand-Dickey algebraic framework explained in \cite[Chapter 1]{D}. Classical discussions of this system of equations can be found in \cite{D,MJD2000}. 
A solution to the KP hierarchy satisfies the so-called Zhakarov-Shabat (ZS) relations, as explained in \cite{D,EGR}. The ZS relations are zero-curvature equations (and hence they provide one of the justifications for stating that the KP hierarchy is an integrable system), from which the standard KP equation in the $t_2$ and $t_3$ variables can be derived.

It is already well-known that it is possible to change the algebra $A$ and define a similar KP system on $\Psi DO(A)$ even when $A$ is not commutative. We mention the paper \cite{M3} by Mulase for a very 
general construction of a KP hierarchy on a differential associative algebra, the paper \cite{EGR}, 
and also Kupershmidt treatise \cite{Ku2000} for a thorough study of the Hamiltonian content of the non-commutative 
KP hierarchy. A particular example appears in \cite{Ku2000,McI2011}, in which the algebra $A$ is the algebra 
$C^\infty(\R,\mbH),$ where $\mbH$ is the skew-field of quaternions. 
In these two references we can already see that the equations deduced from the $(t_2,t_3)-$ZS equations 
can be deeply different from the classical KP equation, see also Section 
\ref{ss:quaternion} below. Other examples of ``non-standard" KP hierarchies can be found in \cite{SZ2015,MRu2021,MR}; we also mention \cite{MR2016} for non-standard commutative examples, and  
\cite{H, Ta, Sa} as representative works on a KP hierarchy posed on algebras equipped with a Moyal multiplication.
    
Now, the construction of general solutions to the KP system for differential algebras $A$ has been systematically studied in \cite{ERMR,MR2016}: in these references, the issues of existence and  uniqueness of solutions, and well-posedness of the corresponding Cauchy problem, are formulated and stated in  frameworks as general as possible. But we remark that, strangely, the analogues of the KP equations and their solutions in generalized settings appear to
 have been mostly ignored, with very few exceptions (we mention just \cite{Ku2000,McI2011,H,MR2016}). Even some of the authors of the present work, in a first step of investigations, have been doubtful to deduce other $(t_2,t_3)-$ZS equations than those already highlighted by other authors. We show in the present work that we can deduce many types of equations: for example, we find first order nonlinear equations, Navier-Stokes-type equations, and many others. To the best of our knowledge, the equations that we present are new. 
    
    The paper is organized as follows. We recall old and not-so-old constructions for formal pseudo-differential operators and KP hierarchy on a unital differential (non-graded) associative algebra $A$ in section \ref{GKP}, and then we exhibit the announced examples in section \ref{ex}. We remark that the results of \cite{ERMR} allow us to 
    state theorems on the well-posedness our non-commutative KP hierarchies, and that these theorems imply that we can prove the existence of a full class of smooth solutions to our non-commutative $(t_2,t_3)-$ZS equations. We present these existence results also in Section  \ref{ex}. 
    
\section{The KP hierarchy of a general differential associative algebra} \label{GKP}

\subsection{Algebraic setting}

	Let $\fK$ be a characteristic zero field and let  $(A,+,* ,\partial)$ be a differential associative algebra 
	over $\fK$. 	We fix the following framework.
	
\begin{Definition}
We define  $$\Psi DO(A) = \left\{ \sum_{n \in \mathbb{Z}} a_n \xi^n : a_n \in A ; a_n = 0 \mbox{ for } n >> 0 
\right\}$$ where $\xi$ is a formal variable.
\end{Definition}

Hereafter we assume that $A$ is unital, that is, we assume that there exists $1 \in A$ such that $\forall a \in A, 1 * a = a * 1 = a.$ We call the elements of $\Psi DO(A)$ ``pseudo-differential operators" even though
we are aware that we are incurring in a slight abuse of terminology.

\begin{Definition}
Let us define addition and multiplication on $\Psi DO(A)$.
$$\begin{array}{cccc}
+ \, : &  \Psi DO(A) \times \Psi DO(A) & \longrightarrow & \Psi DO(A) \\
 & 	(\sum a_n \xi^n , \sum b_m \xi^m) & \longmapsto & \sum (a_p+b_p)\xi^p 
\end{array}$$
and 
$$\begin{array}{cccc}
	* \, : &  \Psi DO(A) \times \Psi DO(A) & \longrightarrow & \Psi DO(A) \\ 
	&  \displaystyle  	\left( a(\xi)=\sum a_n \xi^n\, , \, b(\xi)= \sum b_m \xi^m \right) & \longmapsto & 
	 \displaystyle  \sum_{m,n} \sum_{\alpha\in \N} \frac{1}{\alpha !} ( a_n  \partial^\alpha b_m) (D^\alpha_\xi\xi^n)\xi^{m} 
\end{array}$$
\end{Definition}

\begin{Theorem} 
	$(\Psi DO(A),+,*)$ is a unital (associative) $\fK -$algebra.
\end{Theorem}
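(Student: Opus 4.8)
The plan is to establish the axioms in the order: (i) $*$ is well defined, i.e.\ its output lies in $\Psi DO(A)$; (ii) $(\Psi DO(A),+)$ is a $\fK$-vector space and $*$ is $\fK$-bilinear (hence distributive over $+$); (iii) there is a two-sided unit; (iv) $*$ is associative. It helps to first rewrite the product. Since $D_\xi^{\alpha}\xi^{n}=n(n-1)\cdots(n-\alpha+1)\,\xi^{n-\alpha}$, put $\binom{n}{\alpha}:=\frac{1}{\alpha!}\,n(n-1)\cdots(n-\alpha+1)\in\Z$ for all $n\in\Z$, $\alpha\in\N$; then
\[
a(\xi)*b(\xi)=\sum_{p\in\Z}c_p\,\xi^{p},\qquad c_p=\sum_{\substack{n,m\in\Z,\ \alpha\in\N\\ n+m-\alpha=p}}\binom{n}{\alpha}\,a_n*(\partial^{\alpha}b_m).
\]
For step (i), pick $N$ with $a_n=0$ for $n>N$ and $b_m=0$ for $m>N$; in $c_p$ a nonzero term forces $n\le N$, $m\le N$ and $\alpha=n+m-p\ge0$, so $p-N\le n,m\le N$. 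Hence $c_p$ is a finite sum in $A$, and $c_p=0$ once $p>2N$, so $a(\xi)*b(\xi)\in\Psi DO(A)$. The same bound shows that in each fixed $\xi$-degree only finitely many monomial contributions occur, which is precisely what will license the rearrangements of series needed below.

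Step (ii) is immediate from the displayed formula, since $\partial^{\alpha}$ is $\fK$-linear and the product of $A$ is $\fK$-bilinear. For step (iii) I would take $\mathbf 1:=1\cdot\xi^{0}$. Because $\binom{0}{\alpha}=0$ for $\alpha\ge1$, one reads off $\mathbf 1*b(\xi)=b(\xi)$. For $a(\xi)*\mathbf 1$ one uses $\partial 1=0$ (which follows at once from $\partial 1=\partial(1*1)=2\,\partial 1$), hence $\partial^{\alpha}1=0$ for $\alpha\ge1$ and only the $\alpha=0$ term of the sum survives, giving $a(\xi)*\mathbf 1=a(\xi)$.

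The heart of the proof is step (iv). By bilinearity and the degreewise finiteness from step (i), it suffices to verify $(a\xi^{n}*b\xi^{m})*c\xi^{k}=a\xi^{n}*(b\xi^{m}*c\xi^{k})$ for $a,b,c\in A$ and $n,m,k\in\Z$. Applying the product rule twice and using associativity of $A$, the left-hand side equals
\[
\sum_{\alpha,\beta\in\N}\binom{n}{\alpha}\binom{n+m-\alpha}{\beta}\;a*(\partial^{\alpha}b)*(\partial^{\beta}c)\;\xi^{n+m+k-\alpha-\beta},
\]
while, expanding $\partial^{\delta}(b*\partial^{\gamma}c)=\sum_{j=0}^{\delta}\binom{\delta}{j}(\partial^{j}b)*(\partial^{\gamma+\delta-j}c)$ by the Leibniz rule for the iterated derivation, the right-hand side equals
\[
\sum_{\gamma,\delta\in\N}\sum_{j=0}^{\delta}\binom{m}{\gamma}\binom{n}{\delta}\binom{\delta}{j}\;a*(\partial^{j}b)*(\partial^{\gamma+\delta-j}c)\;\xi^{n+m+k-\gamma-\delta}.
\]
Comparing, for each $i,l\in\N$, the coefficient of $a*(\partial^{i}b)*(\partial^{l}c)\,\xi^{n+m+k-i-l}$ on the two sides, the whole claim reduces to the numerical identity
\[
\binom{n}{i}\binom{n+m-i}{l}=\sum_{t=0}^{l}\binom{m}{l-t}\binom{n}{i+t}\binom{i+t}{i}\qquad(i,l\in\N,\ n,m\in\Z),
\]
which I would derive from the ``subset-of-a-subset'' relation $\binom{n}{i+t}\binom{i+t}{i}=\binom{n}{i}\binom{n-i}{t}$ (a polynomial identity in $n$, checked on $n\in\N$) together with the Vandermonde convolution $\sum_{t=0}^{l}\binom{n-i}{t}\binom{m}{l-t}=\binom{n+m-i}{l}$.

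I expect step (iv) to be the only genuine obstacle: one must carry out the double application of $*$ cleanly, keep the three derivative orders straight, justify the interchange of the a priori infinite summations, and recognize the combinatorial identity above; once the degreewise finiteness of step (i) is in hand, everything else is routine.
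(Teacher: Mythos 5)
Your proof is correct and complete. Note, however, that the paper does not actually prove this theorem: it simply states that the result is well known and refers to \cite{D} and \cite{ERMR}, so there is no ``paper proof'' to match your argument against. What you supply is the standard direct verification, essentially the one carried out in those references: rewriting the product as $c_p=\sum_{n+m-\alpha=p}\binom{n}{\alpha}\,a_n(\partial^{\alpha}b_m)$ with the generalized binomial coefficient, checking degreewise finiteness so that all rearrangements are legitimate, and reducing associativity on monomials to the identity $\binom{n}{i}\binom{n+m-i}{l}=\sum_{t}\binom{m}{l-t}\binom{n}{i+t}\binom{i+t}{i}$, which follows from the subset-of-a-subset relation plus Vandermonde convolution (both polynomial identities in the upper arguments, hence valid for negative $n,m$ --- a point you rightly flag, and which matters here since $L$ has genuinely negative-order terms). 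Two small things worth making explicit if you write this up: first, the coefficient comparison in step (iv) is justified not by any linear independence of the elements $a*(\partial^{i}b)*(\partial^{l}c)$ in $A$, but by the fact that both sides, after reindexing, are sums over the same index set $(i,l)$ of the same elements of $A$ with matching numerical coefficients; second, the Leibniz expansion $\partial^{\delta}(b*\partial^{\gamma}c)=\sum_{j}\binom{\delta}{j}(\partial^{j}b)*(\partial^{\gamma+\delta-j}c)$ is order-preserving and so holds verbatim in the noncommutative setting, which is exactly what the generality of the theorem requires.
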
 
This theorem is well-known, see for example \cite{D}, or the more recent paper \cite{ERMR}. Hereafter we omit the multiplication symbol $*$ or we use simply `` $\cdot$ "\,.  
We also point out that $\Psi DO(A)$ is a {\em graded} algebra with $deg_\xi ( \sum a_n \xi^n ) = \max\{ n \in \mathbb{Z} : a_n \neq 0 \}$ and (by convention) $deg_\xi ( 0 ) = - \infty$. This grading allows us to define the vector subspaces 
$\Psi DO^k(A) = \{ P \in \Psi DO(A) : deg_\xi (P) \leq k \}$, which will appear below.

\begin{Proposition}
	The decomposition $\Psi DO(A)= DO(A)\oplus IO(A)$, in which $IO(A) = \Psi DO^{-1}(A)$ is the algebra of integral operators and $DO(A)=\Psi DO(A) \; \setminus \; \Psi DO^{-1}(A)$ is the subalgebra of differential operators, is a vector space splitting. We note that $IO(A)$ may be a 
	non-unital algebra.
\end{Proposition}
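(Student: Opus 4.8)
The plan is to establish the three assertions of the proposition in turn: the vector-space splitting, the closure of $IO(A)$ and of $DO(A)$ under the product $*$, and the failure of unitality of $IO(A)$. Throughout, recall that an element $P = \sum_{n\in\Z} a_n\xi^n$ of $\Psi DO(A)$ is a series bounded above in $n$, that $IO(A) = \Psi DO^{-1}(A)$ consists of those $P$ with $a_n = 0$ for all $n \geq 0$, and that $DO(A)$ is to be read as the set of $P$ with $a_n = 0$ for all $n < 0$ (the ``polynomial part''; the set-difference notation in the statement is a harmless abuse, since the literal complement is not a subspace).

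For the splitting, given $P = \sum_{n\in\Z} a_n\xi^n$ I would put $P_+ = \sum_{n\geq 0} a_n\xi^n$ and $P_- = \sum_{n\leq -1} a_n\xi^n$. Since $a_n = 0$ for $n \gg 0$, the sum defining $P_+$ is finite, so $P_+ \in DO(A)$, while $P_- \in \Psi DO^{-1}(A) = IO(A)$ by definition of the $\xi$-degree; and $P = P_+ + P_-$, so $\Psi DO(A) = DO(A) + IO(A)$. For directness, any $Q \in DO(A)\cap IO(A)$ has $deg_\xi(Q)\leq -1$ and at the same time involves only non-negative powers of $\xi$, forcing $Q = 0$. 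That $DO(A)$ and $IO(A)$ are vector subspaces is immediate from the definition of $+$, so the sum is a direct sum of subspaces.

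For the ring structure I would extract the degree behaviour of $*$ from the multiplication formula: a single monomial in the product of $a(\xi) = \sum a_n\xi^n$ and $b(\xi) = \sum b_m\xi^m$ is $\frac{1}{\alpha!}(a_n\partial^\alpha b_m)(D^\alpha_\xi\xi^n)\xi^m$, and since $D^\alpha_\xi\xi^n$ is a scalar multiple of $\xi^{n-\alpha}$ this monomial lies in $\xi$-degree $n+m-\alpha$. If $a,b\in IO(A)$ then $n,m\leq -1$ and $\alpha\geq 0$, so $n+m-\alpha\leq -2\leq -1$; hence $a*b\in\Psi DO^{-1}(A) = IO(A)$, and $IO(A)$ is a subalgebra (without unit, since $1 = \xi^0\notin IO(A)$). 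If $a,b\in DO(A)$ then $n,m\geq 0$, and the monomial is nonzero only when $D^\alpha_\xi\xi^n\neq 0$, i.e. only when $\alpha\leq n$; thus $n+m-\alpha\geq m\geq 0$, so $a*b$ involves only non-negative powers of $\xi$ and lies in $DO(A)$, which together with $1\in DO(A)$ makes $DO(A)$ a unital subalgebra. Well-definedness of these products (finiteness of the number of triples $(n,m,\alpha)$ contributing to a fixed output degree) is already part of the preceding theorem that $(\Psi DO(A),+,*)$ is an algebra, which I would simply cite.

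Finally, to justify that $IO(A)$ may genuinely fail to be unital I would give the minimal example $A = \fK$ with zero derivation: then $\Psi DO(A)$ is the field of formal Laurent series in $\xi^{-1}$ and $IO(A) = \xi^{-1}\fK[[\xi^{-1}]]$, and any two-sided unit $e$ of $IO(A)$ would have to satisfy $e*\xi^{-1} = \xi^{-1}$, hence $e = 1\notin IO(A)$, a contradiction. The only point requiring a little care in the whole argument is the bookkeeping with the twisted multiplication --- in particular that $D^\alpha_\xi\xi^n$ vanishes for $\alpha > n$ when $n\geq 0$ but never vanishes when $n < 0$ --- which is precisely what drives the two degree estimates above; everything else is routine.
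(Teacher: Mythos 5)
Your argument is correct and follows essentially the same route as the paper: both proofs reduce to the degree bookkeeping for a product monomial $\frac{1}{\alpha!}(a_n\partial^\alpha b_m)(D^\alpha_\xi\xi^n)\xi^m$, showing degree $\geq 0$ when $n,m\geq 0$ (using that $D^\alpha_\xi\xi^n$ vanishes for $\alpha>n$) and degree $\leq -1$ when $n,m\leq -1$. You additionally write out the (trivial) direct-sum verification and a concrete non-unitality example, which the paper leaves implicit, and you correctly read the set-difference notation for $DO(A)$ as the subspace of non-negative powers.
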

\begin{proof}
The proof relies on standard arguments:
	\begin{itemize}
		\item if $(n,m) \in \N^2,$ then for all $\alpha \in \N,$ we have that $deg_\xi \left((D^\alpha_\xi \xi^n) \xi^m\right) \in \N \cup \{-\infty\},$ which shows that $DO(A)$ is stable under multiplication, 
		\item while for $(n,m) \in (\N^*)^2,$ $\forall \alpha \in \N,$ we have that $deg_\xi \left((D^\alpha_\xi \xi^{-n}) \xi^{-m}\right) <0,$ which shows that $DO(A)$ is stable under multiplication. 
	\end{itemize}
	\end{proof}
	
\noindent	A consequence of this proposition is that any operator $a \in \Psi DO(A)$ decomposes uniquely under this direct sum as $$a = a_D + a_S $$	for $a_D \in DO(A)$ and $a_S \in IO(A)$.

\smallskip

In the most classical framework, $A$ is an algebra of functions equipped with a derivation $\partial.$  Standard choices are $A =  C^\infty(S^1,\K)$ with $\K = \R, \C$ or, in  \cite{Ku2000,McI2011}, $\K = \mathbb{H},$ and $\partial = \frac{d}{dx}.$ In these contexts, the  algebras of functions $A$ are Fr\'echet algebras equipped with natural notions of differentiability, and the operations of addition, multiplication and differentiation are smooth. It follows that addition and multiplication on $\Psi DO(A)$ are smooth in the following sense: 
if $$a = \sum_{n \in \Z} a_n \xi^n, \quad  b = \sum_{n \in \Z} b_n \xi^n ,$$
we set $$a+b = c = \sum_{n \in \Z} c_n \xi^n$$
and $$ab = d = \sum_{n \in \Z} d_n \xi^n \; .$$
Then, the map $$ \left((a_n)_{n \in \Z},(b_n)_{n \in \Z}\right) \mapsto \left((c_n)_{n \in \Z},(d_n)_{n \in \Z}\right)$$ is smooth as a map from $\Pi_\Z A \times \Pi_\Z A \rightarrow \Pi_\Z A \times \Pi_\Z A$, namely,
each coefficient $c_n$ and $d_n$ depend smoothly (in the usual way) on finite numbers of coefficients $a_n$ and $b_n$. 

The foregoing set-up circumvent the use of the technical tools recently developed in \cite{ERMR,MR2016}, where a fully rigorous framework for smoothness on objects such as $\Psi DO(A)$, is described and used. 
The geometric approach to smoothness described therein is based on the theories of diffeologies and Fr\"olicher spaces. We will not dwell on it here, we refer the reader to the mentioned papers \cite{ERMR,MR2016} for details,
but we do mention that, for instance, our point of view allows us to develop a fully rigorous approach to smoothness on Frobenius algebras ---so as to 
justify all the formal calculations of \cite{SZ2015,SZ2017,Z2017}---  
 by adapting the considerations appearing in \cite{ERMR}. 
	
	\begin{rem}
	We comment on the relation of our approach with more standard (and also more restrictive) viewpoints.
Our notion of smoothness restricts to the notion of smoothness in infinite dimensional geometry on manifolds (with 
atlases) modelled on complete locally convex topological vector spaces. This latter theory is discussed in 
\cite{Neeb}. In particular,
see \cite{MR2016},	if we work with Fr\'echet manifolds of operators, being smooth in our sense is equivalent to 
being Fr\'echet smooth. 
	\end{rem} 
	

\subsection{Integration of the KP hierarchy on a differential algebra}

Let $T=\{t_n\}_{n \in \N^*}$ be an infinite set of formal (time) variables and let us consider the 
algebra of formal series $\Psi DO(A[[T]])$ with infinite set of formal variables $t_1,t_2,\cdot$ with $T-$valuation 
$val$ defined by $val_T(t_n) = n$ \cite{M1}. 

If we assume that the algebra operations on $A$ are smooth, then so are the corresponding operations 
on $A[[T]]$, essentially because they are defined term by term using finite numbers of operations on $A$. This
intuitive description can be fully formalized in the context of ultrametric completions, see \cite{ERMR}. 
Now we extend this notion of smoothness to $\Psi DO(A[[T]])$ 
as in \cite{ERMR,MR2016}, and we are able to talk about smoothness at the level of pseudo-differential operators. We will make some further remarks on this theme after Theorem \ref{KPcentral}.

\smallskip

The Kadomtsev-Petviashvili (KP) hierarchy reads
\begin{equation} \label{eq:KP}
	\frac{d L}{d t_{k}} = \left[ (L^{k})_{D} , L \right]\; , \quad
	\quad k \geq 1 \; ,
\end{equation}
with initial condition $L(0)=L_0  \in \partial + IO(A)$.  The dependent
variable $L$ is chosen to be of the form
$$L = \xi + \sum_{\alpha \leq -1 } u_\alpha \xi^\alpha  \in {\Psi}DO^1(A[[T]]) \; .$$
A standard reference on (\ref{eq:KP}) is L.A. Dickey's treatise \cite{D}, see also 
\cite{MJD2000,M1,M3}. 

\smallskip

In order to solve the KP hierarchy, we need the following groups (see e.g. \cite{ERMR,MR2016} for the latest adaptations of Mulase's constructions appearing in \cite{M1,M3}):
$$ 
\bar{G} = 1 + IO(A[[T]])\; ,
$$

$$ \overline{\Psi} = \left\{ P = \sum_{\alpha \in {\mathbb{Z}}}
a_{\alpha}\,\xi^{\alpha}  : a_\alpha \in A[[T]]\; ,
\, val_T(a_\alpha)\geq \alpha \hbox{ and } P|_{t = 0} \in 1 + \Psi DO^{-1}(A)  \right\}\; ,$$
where $P|_{t=0}$ is the equivalence class $P \mbox{ mod } {\mathcal I}$, and ${\mathcal I}$ is the ideal of
$A[[T]]$ generated by $\{t_1,t_2, \cdots\}$,
and 
$$
\overline{\D} = \left\{ P= \sum_{\alpha \in \mathbb{Z}}
a_{\alpha}\,\xi^{\alpha} :  P \in\overline{\Psi}  \mbox{ and } a_\alpha=0 \mbox { for } \alpha <0 \right\} \; .
$$
We have a matched pair (see for example \cite{Maj}) 
$$ \overline{\Psi} = \bar{G} \bowtie \overline{\D}$$ which is smooth under the terminology we gave before.
The following result, from \cite{ERMR}, gives a synthesized statement of our results on existence and
uniqueness of solutions to the KP hierarchy (\ref{eq:KP}), and on smooth dependence on the initial conditions.

\begin{Theorem} \label{KPcentral} 
	
	~
	
	Consider the KP hierarchy \ref{eq:KP} with initial condition $L(0)=L_0$. Then,
	\begin{enumerate}
		\item There exists a pair $(S,Y) \in \bar{G} \times \overline{\D}$ 
		such that the unique solution to Equation $(\ref{eq:KP})$ with $L|_{t=0}=L_0$ is
		\begin{eqnarray*}
			L(t_1,t_2,\cdots)=Y\,L_0\,Y^{-1} = S L_0 S^{-1} \; .
		\end{eqnarray*}
		\item The pair $(S,Y)$ is uniquely determined by the smooth decomposition problem 
		$$exp\left(\sum_{k \in \N}\tau_k L_0^k\right) = S^{-1}Y\; , $$ 
		and the solution $L$ depends smoothly on the initial condition $L_0$.
		
		\item The solution operator $L$ is smoothly dependent  on the 
		initial value $L_0.$ 
	\end{enumerate}
	
\end{Theorem}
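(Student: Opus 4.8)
The plan is to reduce the statement to a decomposition problem in the matched pair $\overline{\Psi} = \bar G \bowtie \overline{\D}$ and then to invoke the smoothness of that decomposition, which is the content of the matched-pair structure recalled just before the theorem. First I would exhibit the Sato-type dressing: given $L_0 \in \partial + IO(A)$, consider the element $Z = \exp\!\left(\sum_{k \in \N^*} t_k L_0^k\right)$. Since $val_T(t_k L_0^k) \geq k$ and each power $L_0^k$ lies in $\Psi DO^k(A)$, the series defining $Z$ converges in the $T$-adic topology and $Z \in \overline{\Psi}$ (indeed $Z|_{t=0} = 1$). By the matched-pair factorization there is a unique pair $(S,Y)$ with $S \in \bar G$, $Y \in \overline{\D}$, and $S^{-1} Y = Z$, which is exactly the decomposition appearing in item (2). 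One then sets $L := S L_0 S^{-1}$; because $S \in 1 + IO(A[[T]])$, conjugation by $S$ leaves the leading symbol $\xi$ untouched and lands $L$ in $\xi + IO(A[[T]])$, as required.

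Next I would verify that this $L$ solves the KP hierarchy. Differentiating $S^{-1}Y = Z$ with respect to $t_k$ and using $\partial Z/\partial t_k = L_0^k Z = Z L_0^k$ gives, after rearranging, that $S^{-1}(\partial_{t_k} S) + (\partial_{t_k} Y) Y^{-1} = S^{-1} L_0^k S$; equivalently $-(\partial_{t_k} S) S^{-1} + B_k = L$-conjugate of $L_0^k$, where $B_k := (\partial_{t_k} Y) Y^{-1} \in DO(A[[T]])$. Splitting this identity along the direct sum $\Psi DO(A[[T]]) = DO(A[[T]]) \oplus IO(A[[T]])$ and matching degrees, one identifies $B_k = (L^k)_D$. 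Then $\partial_{t_k} L = [\partial_{t_k} S \cdot S^{-1}, L] = [B_k - L^k, L] = [(L^k)_D, L]$, which is \eqref{eq:KP}; here I use that $Y L_0 Y^{-1} = S L_0 S^{-1} = L$ since $Y = S Z$ and $Z$ commutes with $L_0$. Uniqueness follows from the uniqueness of the factorization: any solution $L$ of \eqref{eq:KP} produces, via the integrated flow, a dressing operator $S$ solving the same decomposition problem, and the matched-pair decomposition is unique.

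Finally, for item (3) — the smooth dependence on $L_0$ — I would argue that the assignment $L_0 \mapsto Z = \exp(\sum t_k L_0^k)$ is smooth from $\partial + IO(A)$ into $\overline{\Psi}$, because it is built from the smooth algebra operations on $A[[T]]$ extended coefficient-wise to $\Psi DO(A[[T]])$ in the sense recalled in the excerpt (each coefficient of $Z$ depends on finitely many coefficients of $L_0$ through finitely many additions and multiplications); then I would compose with the smoothness of the matched-pair decomposition map $\overline{\Psi} \to \bar G \times \overline{\D}$, $Z \mapsto (S,Y)$, and finally with the smooth map $(S, L_0) \mapsto S L_0 S^{-1}$. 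The composite $L_0 \mapsto L$ is therefore smooth.

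The main obstacle I expect is not any single computation but the bookkeeping that guarantees the relevant series converge and stay inside the prescribed spaces: one must check carefully that $Z \in \overline{\Psi}$ (the $T$-valuation condition $val_T(a_\alpha) \geq \alpha$), that $S \in \bar G$ and $Y \in \overline{\D}$ rather than merely in $\Psi DO(A[[T]])$, and that differentiation in $t_k$ commutes with the infinite sums — all of which ultimately rest on working in the $T$-adic (ultrametric) completion as in \cite{ERMR}. Granting the matched-pair factorization and its smoothness as already stated in the excerpt, the rest is the classical Sato argument transported verbatim to the non-commutative setting, so the conceptual content is entirely in that factorization.
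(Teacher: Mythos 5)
Your proposal is correct and follows essentially the same route as the paper, which states this theorem without proof as a citation of \cite{ERMR}: the argument there is exactly the Mulase-type factorization you describe, namely forming $Z=\exp(\sum_k t_k L_0^k)$, splitting it as $S^{-1}Y$ in the matched pair $\bar{G}\bowtie\overline{\D}$, setting $L=SL_0S^{-1}$, and deriving the hierarchy and the smooth dependence from the smoothness and uniqueness of that decomposition. The only points to tidy are the sign bookkeeping in the differentiated identity (the correct form is $-(\partial_{t_k}S)S^{-1}+(\partial_{t_k}Y)Y^{-1}=L^k$, which your subsequent steps in fact use) and a fuller justification of uniqueness, both of which are handled in \cite{ERMR}.
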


In this theorem the expression ``smoothly dependent" is to be understood essentially as explained
in the two paragraphs before Remark 2.5, namely, it means that each coefficient of the solution $L = L = \xi + \sum_{\alpha \leq -1 } u_\alpha \xi^\alpha$, understood as a series in the $T$ variables, depends smoothly on a finite number of the coefficients of the initial value $L_0.$ Indeed, if one reads the computations given in e.g. \cite{M3,ERMR,MR2016} where the computations of coefficients are sketched, it is possible to check that each coefficient of $L$ is obtained via a large but finite number of smooth operations (addition, multiplication, inversion, derivation) in $A[[T]]$ from the coefficients of $L_0.$	

\subsection{ZS-Equations deduced from the hierarchy}

It is well-known that the zero curvature equations
\begin{equation} \label{zcc0}
\frac{\partial}{\partial t_m} L^n_+ - \frac{\partial}{\partial t_n} L^m_+ = [ L^m_+ , L^n_+ ]
\end{equation}
are a consequence of the KP hierarchy, see for instance \cite[Proposition 5.1.4]{D} or \cite{EGR}, and that these 
equations are a system of non-linear equations in three independent variables, namely, $t_n,t_m$ and the ``space 
variable" implicit in the derivation $\partial$ of the differential algebra $A$. Our Theorem \ref{KPcentral} 
implies the following result (in the proposition below, we put $t_k =0$ for $k \neq n,m$):

\begin{Proposition} \label{ivp0}
Let $L = \xi + \sum_{\alpha \leq -1 } u_\alpha \xi^\alpha \in {\Psi}^1(A[[T]])$ and assume that the zero curvature
 equation $(\ref{zcc0})$ is a system of equations for $N$ dependent variables $u_1, \cdots, u_N$. Fix $u_{1,0}, 
 \cdots, u_{N,0} \in A$. Then, the system of equations $(\ref{zcc0})$ has a unique smooth solution $L$ with $u_1(0) 
 = u_{1,0}, \cdots, u_{N}(0) =u_{N,0}$, and this solution is smooth with respect to initial conditions. 
\end{Proposition}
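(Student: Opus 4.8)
The plan is to reduce Proposition~\ref{ivp0} to Theorem~\ref{KPcentral} by viewing the zero-curvature system $(\ref{zcc0})$ (with all $t_k=0$ for $k\neq n,m$, and in fact here $n=2$, $m=3$) as nothing more than a projection of the full KP hierarchy $(\ref{eq:KP})$ onto the two time variables of interest. First I would recall the standard fact, cited in the excerpt from \cite[Proposition 5.1.4]{D} and \cite{EGR}, that if $L$ solves $(\ref{eq:KP})$ then the differential parts $L^n_+ = (L^n)_D$ automatically satisfy $(\ref{zcc0})$; thus \emph{any} solution produced by Theorem~\ref{KPcentral} gives a solution of the ZS system. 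The content of the proposition is therefore the matching of initial data: given $u_{1,0},\dots,u_{N,0}\in A$, I must produce an admissible initial operator $L_0 \in \xi + IO(A)$ whose first $N$ coefficients are exactly the prescribed ones, feed it into Theorem~\ref{KPcentral}, and then argue that the resulting $L$, restricted to the $(t_n,t_m)$-plane, is the desired unique smooth solution.

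The key steps, in order, would be: (1) Observe that the hypothesis ``$(\ref{zcc0})$ is a system for $N$ dependent variables $u_1,\dots,u_N$'' means precisely that, after expanding $L^n_+$ and $L^m_+$ in $\xi$ and collecting coefficients, the evolution equations close on $u_1,\dots,u_N$ (the higher $u_\alpha$ either do not appear or are determined algebraically/differentially by the first $N$); this is what makes the truncated initial data $u_{1,0},\dots,u_{N,0}$ sufficient. (2) Choose $L_0 = \xi + u_{1,0}\xi^{-1} + \cdots + u_{N,0}\xi^{-N} \in \partial + IO(A)$ (padding with zeros, or with whatever the closure condition forces); this lies in the admissible class for Theorem~\ref{KPcentral}. (3) Apply Theorem~\ref{KPcentral} to get the unique $L(t_1,t_2,\dots) = SL_0S^{-1}$ solving $(\ref{eq:KP})$, with smooth dependence on $L_0$, hence on $u_{1,0},\dots,u_{N,0}$. (4) Set all $t_k=0$ for $k\neq n,m$; since $(\ref{zcc0})$ is a consequence of $(\ref{eq:KP})$, the operators $L^n_+$, $L^m_+$ built from this $L$ satisfy the ZS system, and reading off the coefficients $u_1,\dots,u_N$ gives a solution of the finite system with the prescribed initial values. (5) For uniqueness, argue that a solution of the closed $(\ref{zcc0})$-system for $u_1,\dots,u_N$ reconstructs a solution of the KP hierarchy in the $(t_n,t_m)$ variables (using the closure to propagate the remaining coefficients), and invoke the uniqueness clause of Theorem~\ref{KPcentral}; smooth dependence on initial conditions then descends from part (3) of that theorem.

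The main obstacle, and the step deserving the most care, is step (5) together with the bookkeeping in step~(1): one must be sure that ``the ZS system closes on $u_1,\dots,u_N$'' really does let one recover a genuine KP solution $L$ (and not merely the pair $L^n_+, L^m_+$), so that the uniqueness in Theorem~\ref{KPcentral} can be brought to bear. In the classical commutative case this is the familiar passage from the KP equation back to the Lax operator, but in the present general (possibly noncommutative) associative-algebra setting one should check that the reconstruction of $L$ from $L^n_+$ is still valid, or alternatively phrase uniqueness directly at the level of the truncated system by a Gr\"onwall-type / formal-power-series recursion in the $T$-adic topology, exactly as the coefficient-by-coefficient arguments of \cite{M3,ERMR,MR2016} are carried out. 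The rest is routine: admissibility of the chosen $L_0$, and the fact that ``smooth'' here means coefficientwise smooth dependence through finitely many algebra operations, both follow immediately from the discussion preceding Remark~2.5 and from Theorem~\ref{KPcentral} itself.
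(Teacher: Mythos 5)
Your proposal follows essentially the same route as the paper: the printed proof is exactly your steps (2)--(4), namely set $L_0 = \xi + u_{1,0}\xi^{-1}+\cdots+u_{N,0}\xi^{-N} \in \partial + IO(A)$, solve the hierarchy via Theorem \ref{KPcentral}, and invoke Dickey's Proposition 5.1.4 to conclude that the resulting $L$ satisfies $(\ref{zcc0})$. The paper does not attempt your step (5); uniqueness there is understood only in the weaker sense made explicit in the remark immediately following the proposition (the $(t_n,t_m)$-dependence is the one fixed by the full KP flow with initial datum $L_0$), so your concern about reconstructing $L$ from $L^n_+$ is well placed but is simply not addressed in the paper's argument.
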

\begin{proof}
We define $L(0)=L_0  \in \partial + IO(A)$ as 
$L(0) = \xi + u_{1,0} \xi^{-1} + \cdots + u_{N,0} \xi^{-N}$, we solve the KP hierarchy with initial condition 
$L_0$, and we apply \cite[Proposition 5.1.4]{D}.
\end{proof}

We note that this proposition does not solve the full Cauchy problem for (\ref{zcc0}), since our initial condition fixes only the ``space" dependence of the solution, while the dependence on $t_m , t_n$ is fixed by the solution
to the complete KP hierarchy with initial data $L_0$ defined as in the above proof. Nonetheless we think it is a relevant result on solutions
to very general zero curvature equations. 

\smallskip

We consider the set of independent variables 
$\{t_1,t_2,t_3\}$ and write down explicitly the corresponding zero curvature equations;  we are mostly interested
in the equations arising from the set $\{t_2,t_3\}$, but we also try the $\{t_1,t_2\}$ and $\{t_1,t_3\}$ easy cases for completeness. 

With the notation
$$ L = \sum_{ k \leq 1} u_k \xi^k$$
with $u_1 = 1$ and $u_0=0,$ we have
$$
\begin{array}{|c|c|c|}
	\hline
	& L^2 & L^3 \\ \hline
	\sigma_3 &-&\xi^3
	\\ \hline
	\sigma_2 &\xi^2&0
	\\ \hline
	\sigma_1 &0&3 u_{-1}\xi
	\\ \hline
	\sigma_0 &2u_{-1}\xi^0& (3 u_{-2} + 3 \partial u_{-1}) \xi^0 \\ \hline
\end{array}
$$
in which $\sigma_k$, $k=0,1,2,3$, indicates higher symbols (that is, homogeneous parts of degree $k$) of the 
corresponding pseudo-differential operator, and therefore
\begin{eqnarray*}
	\left[L^2_+,L^3_+\right] & = & 	\left[\xi^2,\xi^3\right] + 3	\left[\xi^2,u_{-1}\xi\right] + 3 	\left[\xi^2,u_{-2}+ \partial u_{-1}\right] \\
	&& + 2	\left[u_{-1},\xi^3\right]+ 6 	\left[u_{-1},u_{-1}\xi\right] + 6 	\left[u_{-1},u_{-2}+ \partial u_{-1}\right] \\
	& = & 0 + (6 \partial u_{-1}\xi^2 + 3 \partial^2 u_{-1}\xi) + (6 \partial u_{-2}\xi + 3 \partial^2 u_{-2} + 6 \partial^2 u_{-1}\xi + 3 \partial^3 u_{-1}) \\
	&&+ (-6\partial u_{-1}\xi^2 - 6 \partial^2u_{-1} \xi - 2 \partial^3 u_{-1}) + (6 [u_{-1},u_{-2}] - 6 \partial u_{-1} u_{-1})\\
	& = & (3 \partial^2 u_{-1}+ 6\partial u_{-2})\xi + ([u_{-1},u_{-2}]-6\partial u_{-1} u_{-1} + \partial^3 u_{-1} + 3\partial^2u_{-2}) \; .
\end{eqnarray*}

The ZS-equations for the pairs $(t_1,t_2)$ and $(t_1,t_3)$ read respectively as
\begin{equation}\label{t12}
	\frac{dL_+^2}{dt_1} - \frac{dL_+}{dt_2} = \left[L_+,L^2_+\right]
	\end{equation}
which gives: 
\begin{equation}\label{t12-deg1}
	\frac{du_{-1}}{dt_1} = \partial u_{-1}\; ,
\end{equation} 
and 
\begin{equation}\label{t13}
	\frac{dL_+^3}{dt_1} - \frac{dL_+}{dt_3} = \left[L_+,L^3_+\right]
\end{equation}
which gives:
\begin{equation}\label{t13-deg1}
	\left\{ \begin{array}{l}\frac{du_{-1}}{dt_1} = \partial u_{-1}
		\\ 
		\frac{du_{-2}}{dt_1} = \partial u_{-2} \end{array} \right.
\end{equation} 
which corresponds to the first equations of the hierarchy  $\frac{du_{-k}}{dt_1} = \frac{du_{-k}}{dx}$ when $A = (C^\infty(\R),\, \cdot\,, d/dx).$ Here, $\partial$ is an arbitrary derivation so that, instead of the usual identification $d/d{t_1} = d/dx$, we get for example the formal integration formula $$ u_{-1} = \exp(t_1 \partial) u_0$$
for an initial value $ u_{-1}|_{t_1 = 0} = u_0$.

\smallskip

We now go to the ZS-equation for the pair $(t_2,t_3),$
\begin{equation}\label{t23}
	\frac{dL^3_+}{dt_2} - \frac{dL_+^2}{dt_3} = \left[L^2_+,L^3_+\right]\; .
\end{equation}
We have:
$$
L^2_+ = \xi^2 + 2\,u_{-1} \; , \quad \quad L^3_+ = \xi^3 + 3\,u_{-1}\,\xi +3\,\partial u_{-1} + 3\,u_{-2}
$$
and
$$
[L^2_+\, ,\, L^3_+] = (3 \partial^2 u_{-1} + 6 \partial u_{-2})\xi + 6[u_{-1},u_{-2}]-6\partial u_{-1}\,u_{-1} 
+\partial^3 u_{-1} + 3 \partial^2 u_{-2}\; .
$$
Equation (\ref{t23}) yields 
 \begin{eqnarray}
 \frac{du_{-1}}{dt_2} & = & \partial^2 u_{-1} + 2 \partial u_{-2}   \label{t23-deg1prime}
		\\ 
 	\quad	\frac{d(u_{-2} + \partial u_{-1})}{dt_2} - \frac{2}{3}\frac{du_{-1}}{dt_3} & = & 2 [u_{-1},u_{-2}]-2\partial u_{-1}\, u_{-1} + \frac{1}{3} \partial^3 u_{-1} + \partial^2 u_{-2}  \; .
 																		\label{t23-deg1prime2}
 \end{eqnarray}

\smallskip

If we set $2 u_{-1} = u$, $t_2 = y$ and $t_3 =t$, system (\ref{t23-deg1prime}), (\ref{t23-deg1prime2}) becomes
\begin{eqnarray}
u_y & = & \partial^2 u + 4 \partial u_{-2}  \label{0t23-deg1} \\
\partial u_y + 2 u_{-2,y} - \frac{2}{3} u_t & = & 2[u,u_{-2}] - \partial u\,u + \frac{1}{3}\partial^3 u + 2 
\partial^2 u_{-2} \; .  \label{0t23-deg11}
\end{eqnarray}

The system (\ref{0t23-deg1}), (\ref{0t23-deg11}) reduces to the system appearing in \cite[Exercise 5.1.8]{D} if 
$u_{-1}, u_{-2}$ commute.   

We remark that this system of equations is obtained directly from our PDOs calculus without further assumptions on the derivation $\partial$. The classical KP-II equation on $C^\infty(S^1,\R)$ can be deduced from it by an adequate substitution procedure, as explained in \cite{D}. In order to work in generalized settings, we  stop our computations at system (\ref{0t23-deg1})-(\ref{0t23-deg11}), because the substitution procedure just  mentioned may not be available in them.

\section{KP equations and hierarchies in various contexts} \label{ex}
\subsection{When $\partial$ is an exterior derivation}

We say that $\partial$ is an exterior derivation if $\partial^2=0$.
In this case, 
$$u \xi^m \cdot v \xi^n = (u v) \xi^{m+n} + (m\, u\, \partial v)\, \xi^{m+n-1} \; ,$$
and therefore 
%
%
the zero-curvature equations (\ref{t23-deg1prime}) and (\ref{t23-deg1prime2}) read
\begin{equation} \label{zcc-exterior0}
\left\{ \begin{array}{rcl}
	\frac{du_{-1}}{dt_2} - 2 \partial u_{-2}&  = &0 \medskip \\
	\frac{du_{-2}}{dt_2} - \frac{2}{3}  \frac{du_{-1}}{dt_3}  & = &  - 2 \partial u_{-1} u_{-1}  + 2[u_{-1},u_{-2}]
	\; , 
\end{array}\right.
\end{equation}
in which we have used that the first equation implies that $\partial u_{-1,t_2} =0$.
%
Now we set $2 u_{-1} = u$, $t_2 = y$ and $t_3 =t$. Equation (\ref{0t23-deg1}) yields 
\begin{equation} \label{firsteq-exterior}
u = 4 \partial_y^{-1} \partial u_{-2}\; ,
\end{equation} 
and therefore equation (\ref{0t23-deg11}) becomes
$$
\partial_y u_{-2} -\frac{4}{3} \partial_y^{-1} \partial_t \partial u_{-2} = 4[ \partial_y^{-1}\partial u_{-2} ,  u_{-2} ]  \;  ,
$$
this is
\begin{equation} \label{zcc-exterior}
\partial^2_y u_{-2} -\frac{4}{3} \partial_t \partial u_{-2}  = 4\partial_y [ \partial_y^{-1}\partial u_{-2} ,  u_{-2} ] \;  ,
\end{equation}
an equation that reduces to a linear wave equation in the commutative case. Proposition \ref{ivp0} yields the result
\begin{Proposition} \label{ivp-exterior}
Let us fix $\overline{u} \in A$. Equation $(\ref{zcc-exterior})$ has a solution $u_{-2}(y,t)$ with 
$u_{-2}(0,0)= \overline{u}$, and this solution is smooth with respect to the initial condition $\overline{u}$. 
\end{Proposition}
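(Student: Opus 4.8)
The plan is to reduce Proposition \ref{ivp-exterior} to the already-established well-posedness result for the KP hierarchy, namely Theorem \ref{KPcentral}, exactly in the spirit of the proof of Proposition \ref{ivp0}. The key observation is that equation (\ref{zcc-exterior}) was \emph{derived} from the $(t_2,t_3)$-Zakharov--Shabat system (\ref{t23-deg1prime})--(\ref{t23-deg1prime2}) under the standing hypothesis $\partial^2=0$, by first solving the degree-one relation (\ref{0t23-deg1}) for $u_{-1}$ in terms of $u_{-2}$ (this is (\ref{firsteq-exterior}), $u = 4\partial_y^{-1}\partial u_{-2}$) and then substituting into the degree-zero relation. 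So a solution of the KP hierarchy automatically produces, via this chain of manipulations, a solution of (\ref{zcc-exterior}); the content of the proposition is that such solutions exist and depend smoothly on a prescribed value of $u_{-2}$.

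Concretely, first I would fix $\overline{u}\in A$ and form the initial operator
\[
L_0 = \xi + u_{1,0}\,\xi^{-1} + u_{2,0}\,\xi^{-2} \in \partial + IO(A),
\]
choosing $u_{2,0} = \overline{u}$ and $u_{1,0}$ arbitrary (say $u_{1,0}=0$), so that $L_0$ has exactly the truncated shape required by Proposition \ref{ivp0} with $N=2$. Next I would invoke Theorem \ref{KPcentral} to obtain the unique smooth solution $L(t_1,t_2,t_3,\dots)$ of the KP hierarchy (\ref{eq:KP}) with $L|_{t=0}=L_0$, setting $t_k=0$ for $k\neq 2,3$ as in Proposition \ref{ivp0}. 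By \cite[Proposition 5.1.4]{D} this $L$ satisfies the zero-curvature equation (\ref{zcc0}) for the pair $(2,3)$, hence its coefficients $u_{-1},u_{-2}$ satisfy (\ref{t23-deg1prime})--(\ref{t23-deg1prime2}), which under $\partial^2=0$ are exactly (\ref{zcc-exterior0}); eliminating $u_{-1}$ through (\ref{firsteq-exterior}) then shows that $u_{-2}(y,t)$ solves (\ref{zcc-exterior}), and $u_{-2}(0,0)=\overline{u}$ by construction. Finally, the smooth dependence of $u_{-2}$ on $\overline{u}$ is inherited from part (3) of Theorem \ref{KPcentral}: each coefficient of $L$, and in particular $u_{-2}$ as a formal series in $t_1,t_2,t_3$, is produced from the coefficients of $L_0$ by finitely many smooth operations (addition, multiplication, inversion, derivation) in $A[[T]]$, exactly as discussed after Theorem \ref{KPcentral}.

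The one point that genuinely requires care --- and which I expect to be the main (if modest) obstacle --- is the presence of the antiderivative $\partial_y^{-1}$ in (\ref{zcc-exterior}) and in the substitution (\ref{firsteq-exterior}). I would not try to invert $\partial_y$ by hand; instead I would remark that this inversion is already \emph{legitimate at the level of the hierarchy}, because the first ZS equation (\ref{0t23-deg1}) is precisely the identity asserting that $u_{-1}$ equals $4\partial_y^{-1}\partial u_{-2}$ along the flow, so the quantity $\partial_y^{-1}\partial u_{-2}$ appearing in (\ref{zcc-exterior}) is by definition the honest coefficient $u_{-1}/2$ of the solution $L$ furnished by Theorem \ref{KPcentral}. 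In other words, (\ref{zcc-exterior}) is to be read as the equation satisfied by the pair $(u_{-1},u_{-2})$ after using (\ref{zcc-exterior0}), and the existence of a genuine (non-formal) antiderivative is never needed: the solution $L$ supplies $u_{-1}$ directly. With this reading the proof is a direct specialization of Proposition \ref{ivp0}, and I would write it in two or three lines mirroring that earlier proof, making explicit only the choice of $L_0$ and the reference to (\ref{zcc-exterior0}) and (\ref{firsteq-exterior}).
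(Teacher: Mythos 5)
Your proof is correct and follows essentially the same route as the paper's: both reduce the statement to Proposition \ref{ivp0} by building an initial operator $L_0$ with $u_{2,0}=\overline{u}$, obtaining a solution of (\ref{zcc-exterior0}), and then reading (\ref{zcc-exterior}) as its consequence with $\partial_y^{-1}\partial u_{-2}$ interpreted as $u_{-1}/2$. The only (immaterial) difference is the choice of $u_{1,0}$ --- the paper takes $u_{1,0}=(1/2)\,u(0)$ for an antiderivative $u$ of $4\partial\overline{u}$, you take $u_{1,0}=0$ --- and your explicit justification of the meaning of $\partial_y^{-1}$ is, if anything, more careful than the published argument.
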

\begin{proof}
We choose $u(y)$ such that $\partial_y u(y) = 4 \partial \overline{u}$. Such a $u(y)$ exists because the derivation 
$\partial_y$ is surjective on $A[[y,t]]$. Now we apply Proposition \ref{ivp0} with $u_{1,0} = (1/2)\,u(0)$ and 
$u_{2,0} = \overline{u}$,
thereby obtaining a solution $u_{-1}(y,t)$ and $u_{-2}(y,t)$ to (\ref{zcc-exterior0}), or, equivalently, to the 
system (\ref{firsteq-exterior})-(\ref{zcc-exterior}). By construction, the solution $u_{-2}(y,t)$ satisfies 
$u_{-2}(0,0)=\overline{u}$ and our general theory implies that it is smooth with respect to the initial condition
$\overline{u}$.
\end{proof}

\subsection{On the case when $\partial$ is the gradient} \label{s:grad}
   
Let $n \in \N^*$ and let us consider $A = C^\infty(\R^n,\R^n)$ with component-wise operations (this is an instance 
of a ``Hadamard product"). $A$ is therefore a commutative and associative algebra, but it not an integral domain.
The algebra $A$ can be equipped with the topology of uniform convergence of derivatives at any order on any compact subset of $\R^n.$ For the first part of the computations, we can also consider $A = C^\infty_c(\R^n,\R^n)$ (functions with compact support) equipped with the Whitney topology (see. e.g. \cite{KM}) or $A = S(\R^n,\R^n)$ (Schwartz space).

We denote by $x=(x_1,...x_n)^t$ a point in $\R^n$, by $u=(f_1(x),\cdots,f_n(x))^t$ a point in $A$, and we define the vector derivation $\nabla$ by 
\begin{equation} \label{grad}
\nabla u = \nabla(f_1(x),...f_n(x))^t = \left(\frac{\partial f_1}{\partial x_1}(x),...,\frac{\partial f_n}{\partial x_n}(x) \right)^t\; .
\end{equation}
Our notation is justified by the fact that ``on the diagonal", $f_1=f_2= \cdots = f_n$, our
derivation $\nabla$ is indeed standard gradient. The fact that $\nabla$ is a derivation is clear; let us
check the Leibniz rule for benefit of the reader:

\noindent We take $u=(f_1(x),\cdots,f_n(x))^t$ and $v=(g_1(x),\cdots,g_n(x))^t$ in $A$. Then (we omit the
argument $x$ for brevity),
\begin{eqnarray*}
\nabla (u \cdot v) & = & \left(\frac{\partial (f_1g_1)}{\partial x_1},...,\frac{\partial (f_n g_n)}{\partial x_n} \right)^t \\
& = & \left(\frac{\partial f_1}{\partial x_1} g_1 + f_1 \frac{\partial g_1}{\partial x_1},...,\frac{\partial f_n}{\partial x_n} g_n + f_n \frac{\partial g_n}{\partial x_n} \right)^t \\
& = & \left(\frac{\partial f_1}{\partial x_1}, \cdots , \frac{\partial f_n}{\partial x_n}\right)^t (g_1, \cdots , g_n)^t + (f_1 , \cdots f_n)^t \left(\frac{\partial g_1}{\partial x_1},...,\frac{\partial g_n}{\partial x_n} \right)^t \\
& = & \nabla(u)\, v + u\, \nabla(v)\; .
\end{eqnarray*}
\noindent We note that this computation shows that we can
replace the partial derivatives $\frac{\partial}{\partial x_1} , \cdots , \frac{\partial}{\partial x_n}$ appearing in (\ref{grad}) for any $n$-tuple of smooth vector fields $X_1, \cdots X_n$ on $\mathbb{R}^n$ and still obtain a meaningful derivation on $A$. We stick with our original
choice in view of Remark 3.2 below.

\smallskip

Equations (\ref{t23}) read as follows (for any of the three choices of $A$ mentioned at the beginning of this subsection):

\begin{equation} \label{eq:t23-gradient}
	\left\{ \begin{array}{l}\frac{du_{-1}}{dt_2} = \nabla^2 u_{-1} + 2 \nabla u_{-2}
	\medskip	\\ 
		-2\frac{du_{-1}}{dt_3} + 6u_{-1}\nabla u_{-1}   - 3 \nabla^2 u_{-1}  =   -2\nabla^3 u_{-1} -3 \nabla^2 ( u_{-2} + u_{-1}) - 3\frac{du_{-2}}{dt_2} \; . \end{array} \right.
\end{equation}

Assuming that $A= C^\infty(\mathbb{R}^n,\mathbb{R}^n)$ equipped with the smooth compact-open topology, we can 
choose $v \in C^\infty(\R^n,\R^n)[[t_2,t_3]]$ such that $\frac{d u_{-2}}{d t_2}  = \nabla (v)$.
  Setting $h_1 = 2\nabla^2 u_{-1} - 3v -3 \frac{d v}{dt_2} - 3 \nabla u_{-1}\,$, we get:
  \begin{equation}\label{t23-gradient}
  	\left\{ \begin{array}{l}\frac{d^2u_{-1}}{dt_2^2} = \nabla^2 \frac{du_{-1}}{dt_2} + 2 v
  	\medskip	\\ 
  	\nabla h_1 =	-2\frac{du_{-1}}{dt_3} + 6u_{-1}\nabla u_{-1}   -3 \nabla^2 u_{-1}  
 \medskip 	\\
  	 3 \frac{d v}{dt_2} +3v = -h_1 +2 \nabla^2 u_{-1} - 3 \nabla u_{-1}
   \end{array} \right.
  \end{equation}
where $t_2$ and $t_3$ are external $\R-$variables. We note that we obtain a ``1D Navier-Stokes-like" equation in 
the second line. In general we can make the following

\begin{rem}
We can get the Navier-Stokes equation at any dimension on $A = C^\infty(\R^n,\R^n)$ if we impose
the additional constraint $div(u_{-1})=\sum_{k=1}^n \frac{\partial u_{-1,k}}{\partial x_k}=0$. 
Indeed, let $h$ be a solution of the equation $$ \nabla h = \nabla h_1 + 3 \nabla^2 u_{-1} - \nu \Delta u_{-1}$$ we get a four lines system:
\begin{equation}\label{t23-gradient-n}
	\left\{ \begin{array}{l}\frac{d^2u_{-1}}{dt_2^2} = \nabla^2 \frac{du_{-1}}{dt_2} + 2 v
	\medskip	\\ 
	\nabla h =	-2\frac{du_{-1}}{dt_3} + 6u_{-1}\nabla u_{-1}   + \nu \Delta u_{-1} 
	\medskip	\\
		3 \frac{d v}{dt_2} +3v = -h_1 +2 \nabla^2 u_{-1} - 3 \nabla u_{-1}
	\medskip	\\
		\nabla h = \nabla h_1 +3 \nabla^2 u_{-1} - \nu \Delta u_{-1}
	\end{array} \right.
\end{equation}
The added line has a particular solution $$h = h_1 +3 \nabla u_{-1} -  \left( \nu\int_0^{x_1} \Delta u_{-1}(z_1,...z_n)dz_1, ... , \nu\int_0^{x_n}\Delta u_{-1}(z_1;...,z_n)dz_n\right)^t$$ only valid in $C^\infty(\R^n,\R^n)[[t_2,t_3]].$
\end{rem}

\subsection{On $C^\infty(\R,\mathbb{H})$} \label{ss:quaternion}
Let $\mathbb{H}$ be the field of quaternions. We give here explicitly the equations derived from the quaternionic KP hierarchy \cite{McI2011,Ku2000}, obtained for $\partial = \frac{d}{dx}$ and for $A = C^\infty(S^1,\mathbb{H})$ or $A = C^\infty(\R,\mathbb{H}). $  Equations (\ref{t23-deg1prime}) and (\ref{t23-deg1prime2}) become:
\begin{equation} \label{tt23}
	\left\{ \begin{array}{l}\frac{du_{-1}}{dt_2} = \partial^2 u_{-1} + 2 \partial u_{-2}
		\\ 
			3\frac{du_{-2} }{dt_2} - 2\frac{du_{-1}}{dt_3}  = [u_{-1},u_{-2}]-6\partial u_{-1} u_{-1} -2 \partial^3 u_{-1} - 3\partial^2u_{-2} \end{array} \right.
\end{equation}
where multiplication is the $\mathbb{H}-$multiplication, while the system (\ref{0t23-deg1}), (\ref{0t23-deg11})
become
$$
u_y = \partial^2 u + 4 u_{-2}
$$ 
and the quaternionic KP equation
$$
3 u_{yy} = \partial \left( 4 u_t - 6 \partial u\, u - \partial^3 u \right) + 3 \partial \left[ u , \partial^{-1}
u_y - \partial u \right]\; 
$$
or
\begin{equation} \label{general}
3 u_{yy} = \partial \left( 4 u_t - 3 (\partial u\, u + u\,\partial u) - \partial^3 u \right) + 3 \partial \left[ u 
, \partial^{-1} u_y  \right]\; .
\end{equation}

Let us expand $u_{-1} = a + i b + j c + k d$ and $ u_{-2} = a' + ib' + jc' + kd'$ where $(a,b,c,d,a',b',c',d') \in A^8[[T]]$. We compute: 
$$[u_{-1},u_{-2}] = \left|\begin{array}{cc}
	c & c' \\ d & d'
\end{array}\right| i + \left|\begin{array}{cc}
d' & d \\ b' & b 
\end{array}\right| j + \left|\begin{array}{cc}
b & b' \\ c & c''
\end{array}\right| k$$
and 
\begin{eqnarray*} \partial u_{-1} u_{-1} &  = & a \partial a + b \partial b + c \partial c + d \partial d \\
	&& + \left(a \partial b + b \partial a + d \partial c - c \partial d\right)i \\ 
		&& + \left(a \partial c + c \partial a + c \partial d - d \partial c\right)j \\
			&& + \left(a \partial d + d \partial a + c \partial b - b \partial c\right)k \\
			\end{eqnarray*}
		Therefore we get the following system:
		  \begin{equation}\label{t23-quaternion}
		  	\left\{ \begin{array}{l}
		  	\frac{d}{dt_2}a = \partial^2 a + 2 \partial a' \; , \quad
		  	\frac{d}{dt_2}b = \partial^2 b + 2 \partial b' \; , \quad
		  	\frac{d}{dt_2}c = \partial^2 c + 2 \partial c' \; , \quad
		  	\frac{d}{dt_2}d = \partial^2 d + 2 \partial d' \medskip 
		  		\\ 
		  		 3\frac{da'}{dt_2} -2\frac{da}{dt_3} = -6\left(a \partial a + b \partial b + c \partial c + d \partial d\right) -2 \partial^3 a - 3\partial^2 a'  \medskip
	  		\\ 
	  		3\frac{d(b' + \partial b)}{dt_2}-2\frac{db}{dt_3}  = \left|\begin{array}{cc}
	  			c & c' \\ d & d'
	  		\end{array}\right|-6\left(a \partial b + b \partial a + d \partial c - c \partial d\right) -2 \partial^3 b - 3\partial^2b'
  		\\ 
  		3\frac{dc'}{dt_2}-2\frac{dc}{dt_3}  = \left|\begin{array}{cc}
  			d' & d \\ b' & b 
  		\end{array}\right|-6\left(a \partial c + c \partial a + c \partial d - d \partial c\right) -2 \partial^3 c - 3\partial^2c'
  	\\ 
  	3\frac{d(d' + \partial d)}{dt_2}-2\frac{dd}{dt_3}  = \left|\begin{array}{cc}
  		b & b' \\ c & c''
  	\end{array}\right|-6\left(a \partial d + d \partial a + c \partial b - b \partial c\right) -2 \partial^3 u_{-1} - 3\partial^2d' \end{array} \right.
		  \end{equation}

Proposition \ref{ivp0} tells us that given $u_{-1}^0, u_{-2}^0 \in \mathbb{H}$, this system has a unique smooth solution with $u_{-1}(0) = u_{-1}^0$, $u_{-2}(0) = u_{-2}^0$, and that this solution is smooth with respect to initial conditions. We also have:

\begin{Proposition} \label{ivp-quaternion}
Let us fix $\overline{u} \in A$. Equation $(\ref{general})$ has a solution $u(y,t) \in A[[y,t]]$ with 
$u(0,0)= \overline{u}$, and this solution is smooth with respect to the initial condition $\overline{u}$. 
\end{Proposition}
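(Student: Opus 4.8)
The plan is to mimic the argument for Proposition \ref{ivp-exterior}, reducing the scalar quaternionic KP equation $(\ref{general})$ to an initial condition for the full KP hierarchy via Proposition \ref{ivp0}. First I would observe that, just as in the exterior-derivation case, Equation $(\ref{general})$ is not itself of the form covered directly by Proposition \ref{ivp0}: the latter applies to the system $(\ref{zcc0})$ in the finitely many dependent variables $u_{-1},\dots,u_{-N}$, whereas $(\ref{general})$ is the single equation obtained after eliminating $u_{-2}$ from $(\ref{tt23})$ using the first line, $u_y=\partial^2 u + 4u_{-2}$, i.e.\ $u_{-2}=\tfrac14(u_y-\partial^2 u)$. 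So the real content is to go backwards: given $\overline{u}\in A$, produce admissible initial data $u_{-1}^0,u_{-2}^0\in\mathbb{H}$ (equivalently $u_{1,0},u_{2,0}$ in the notation of Proposition \ref{ivp0}) whose KP evolution yields a solution of $(\ref{tt23})$ whose associated $u:=2u_{-1}$ solves $(\ref{general})$ and satisfies $u(0,0)=\overline{u}$.

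The key steps, in order: (1) Set $u_{2,0}$ to be determined and choose $u_{1,0}$ so that $2u_{1,0}=\overline{u}$, i.e.\ $u_{1,0}=\tfrac12\overline{u}$; here $\overline{u}$ is a genuine element of $A=C^\infty(\R,\mathbb H)$ (or $C^\infty(S^1,\mathbb H)$), not of $\mathbb H$, so I would first note that Proposition \ref{ivp0} as stated fixes space-independent initial data in $\mathbb H$, and the correct statement to invoke is its underlying mechanism (solve the KP hierarchy with $L_0=\xi+u_{1,0}\xi^{-1}+u_{2,0}\xi^{-2}\in\partial+IO(A)$ for arbitrary $u_{1,0},u_{2,0}\in A$, as in Theorem \ref{KPcentral}), which allows $A$-valued initial data. (2) Apply Theorem \ref{KPcentral}/Proposition \ref{ivp0} with $t_k=0$ for $k\neq 2,3$ to obtain a unique smooth solution $L$ of the $(t_2,t_3)$ zero-curvature equation, hence smooth $u_{-1}(y,t),u_{-2}(y,t)\in A[[y,t]]$ solving $(\ref{tt23})$, depending smoothly on $(u_{1,0},u_{2,0})$. (3) Verify algebraically that the system $(\ref{tt23})$, once the substitution $2u_{-1}=u$, $t_2=y$, $t_3=t$ is made and $u_{-2}$ is expressed via the first equation, is exactly $(\ref{general})$ — this is the chain of equalities already displayed in the excerpt between $(\ref{0t23-deg1})$--$(\ref{0t23-deg11})$ and $(\ref{general})$, so it is bookkeeping with the quaternionic anticommutator $\partial u\,u+u\,\partial u$ replacing $2\partial u\,u$. (4) Conclude $u:=2u_{-1}$ solves $(\ref{general})$ with $u(0,0)=2u_{1,0}(0)=\overline{u}(0)$ — and more precisely $u|_{y=t=0}=\overline{u}$ — and that smooth dependence of $L$ on $L_0$, i.e.\ on $u_{1,0}=\tfrac12\overline{u}$, gives smooth dependence of the solution on $\overline{u}$.

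I would write this compactly, essentially: \emph{``We set $L_0=\xi+\tfrac12\overline{u}\,\xi^{-1}\in\partial+IO(A)$, solve the KP hierarchy with this initial condition by Theorem \ref{KPcentral}, restrict to the variables $t_2=y$, $t_3=t$, and apply the computation leading to $(\ref{general})$; smoothness in $\overline{u}$ follows from part (3) of Theorem \ref{KPcentral}.''} One subtlety worth flagging explicitly is the role of $u_{2,0}$: the reduction from $(\ref{tt23})$ to $(\ref{general})$ uses $u_{-2}=\tfrac14(u_y-\partial^2 u)$, which at $y=t=0$ imposes a relation between $u_{-2}^0$ and the $y$-derivative of $u$ at the origin, the latter being determined by the KP flow itself; so one does not get to choose $u_{2,0}$ freely, but this is harmless — taking $u_{2,0}=0$ (as implicitly done for the analogous $u_{1,0}$ issue in Proposition \ref{ivp-exterior}'s proof via the surjectivity of $\partial_y$) already yields \emph{a} solution with the prescribed $u(0,0)$, which is all that is claimed.

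\textbf{Main obstacle.} The only real work is step (3): checking that the scalar equation genuinely obtained from the quaternionic $(t_2,t_3)$-ZS system coincides with $(\ref{general})$, keeping careful track of non-commutativity (the terms $[u_{-1},u_{-2}]$ and $\partial^{-1}u_y$ versus $\partial u$). But since the excerpt already carries out exactly this reduction to arrive at $(\ref{general})$, in the write-up this step is a citation to the preceding display rather than a new computation; the proof is therefore short, and the conceptual point — as with Propositions \ref{ivp-exterior} and the surrounding discussion — is simply that fixing $L_0$ fixes only the ``spatial'' initial data $u(0,0)=\overline{u}$, not the full Cauchy data in $(y,t)$.
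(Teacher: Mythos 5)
Your proposal is correct and follows essentially the same route as the paper: its proof applies Proposition \ref{ivp0} with $u_{1,0}=\tfrac12\overline{u}$ and a specific $u_{2,0}=\tfrac14(\overline{u}-\partial^2\overline{u})$, and then cites the reduction to Equation (\ref{general}) already displayed in the text. The only (immaterial) difference is that you leave $u_{2,0}$ essentially free (suggesting $0$), whereas the paper fixes it so that in addition $u_y(0,0)=\overline{u}$; either choice yields the claimed solution with $u(0,0)=\overline{u}$.
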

\begin{proof}
We choose $u_{-2}^0=(1/4)(\overline{u} - \partial^2 \overline{u})$ and we apply Proposition \ref{ivp0} with 
$u_{1,0} = (1/2) \overline{u}$ and $u_{2,0} = u_{-2}^0$. By construction, the solution $u(y,t)=2 u_{-1}(y,t)$ 
solves (\ref{general}), it satisfies 
$u(0,0)=\overline{u}$, and our general theory implies that this solution is smooth with respect to the initial 
condition $\overline{u}$.
\end{proof}

\subsection{On the case when $\partial$ is a derivation on a Lie algebra} \label{lie}
Let $\mathfrak g$ be a Lie algebra and let $T_\mathfrak{g}$ be its tensor algebra, namely
$$
T_\mathfrak{g} = \oplus_{k \in \mathbb{N}} T^k(\mathfrak{g})\; , \quad T^k(\mathfrak{g}) = \mathfrak{g}^{\otimes k}
= \mathfrak{g} \otimes \cdots \otimes \mathfrak{g} \; , \quad (T^0(\mathfrak{g}) = \K)\; .
$$
A derivation  $D \in Der(\mathfrak g)$ induces a derivation $\partial$ on $T_\mathfrak{g}$ by the formulas: 
$$\forall \lambda \in \K, \quad \partial \lambda =0,$$
 $$ \partial(u_1 \otimes \cdots \otimes u_n) = (Du_1)\otimes u_2 \otimes \cdots \otimes u_n + u_1 \otimes (Du_2)\otimes u_3 \cdots u_n + \cdots + u_1\otimes \cdots \otimes u_{n-1}\otimes (Du_n)$$
\noindent for all $u_1, \cdots, u_n \in \mathfrak{g}$. 

Equations (\ref{t23}) read as: 
 $$\left\{ \begin{array}{l} 
 	\frac{du_{-1}^{(0)}}{dt_2} = 0 \\
 	\frac{du_{-1}^{(0)}}{dt_3} = \frac{3}{2} \frac{du_{-2}^{(0)}}{dt_3}\\
 	\frac{du_{-1}^{(1)}}{dt_2} = D^2 u_{-1}^{(1)} + 2D u_{-2}^{(1)} \\
 	 3 \frac{du_{-2}^{(1)}}{dt_2} - 2\frac{du_{-1}^{(1)}}{dt_3}  = -2D^3 u_{-1}^{(1)} -3 D^2 u_{-2}^{(1)} \\
 	\cdots
 \end{array}\right.$$

Let us specialize our computations when $D$ is an inner derivative. Let $a \in \mathfrak g.$ We set $D = [a,.]$ 
and we examine the equations obtained 
by considering the grading of the tensor algebra, 
  $$ u_{-k} = u_{-k}^{(0)} + u_{-k}^{(1)} + u_{-k}^{(2)} + \cdots$$
  Equation (\ref{t23}), in its lowest tensor orders, reads:
  $$\left\{ \begin{array}{l} 
  \frac{du_{-1}^{(0)}}{dt_2} = 0 \\
  \frac{du_{-1}^{(0)}}{dt_3} = \frac{3}{2} \frac{du_{-2}^{(0)}}{dt_3}\\
  \frac{du_{-1}^{(1)}}{dt_2} = \left[a,\left[a,u_{-1}^{(1)}\right]\right] + [2a, u_{-2}^{(1)}] \\
  3 \frac{du_{-2}^{(1)}}{dt_2} - 2\frac{du_{-1}^{(1)}}{dt_3}   = -2\left[a,\left[a,\left[a,u_{-1}^{(1)}\right]\right]\right] -3 \left[a,\left[a,u_{-2}^{(1)}\right]\right] \\
  \cdots
  \end{array}
  \right.$$  

Proposition \ref{ivp0} allow us to conclude that the Cauchy problem of this infinite system of equations can be solved in our algebraic setting. We omit the details.
  
  \smallskip
  
  Let us apply this on $\mathfrak{g} = Vect(\R^n)$ and $a = {\partial}_1 + \cdots \partial_n$, the same formulas apply and with the same notations and definitions as in section \ref{s:grad} 
   we have $[a,.] = \nabla$ as in section \ref{s:grad} but here the same equations read as: 
  $$\left\{ \begin{array}{l} 
  	\frac{du_{-1}^{(0)}}{dt_2} = 0 \\
  	\frac{du_{-1}^{(0)}}{dt_3} = \frac{3}{2} \frac{du_{-2}^{(0)}}{dt_3}\\
  	\frac{du_{-1}^{(1)}}{dt_2} = \nabla^2 u_{-1}^{(1)} + 2\nabla u_{-2}^{(1)} \\
  	3 \frac{du_{-2}^{(1)}}{dt_2}-2\frac{du_{-1}^{(1)}}{dt_3}   = -2\nabla^3 u_{-1}^{(1)} -3 \nabla^2 u_{-2}^{(1)} \\
  	\cdots
  \end{array}
  \right.$$

\subsection{On Pincherle derivative}
Let us consider $A = \K[X],$ with $\K = \R$ or $\C.$ The Pincherle derivative is defined as follows: 
$$ \forall T \in End(A), \quad  \partial_\xi T = TX - XT = [-X,T].$$
Following its fundamental properties \cite{Pin1933,RKO1973}, since $\partial_\xi X = 0$ and $\partial_\xi \frac{d}{dX}=1,$ it follows that $\partial_\xi$ is also known as a derivation on (polynomial) differential operators $A\left[\frac{d}{dX}\right].$
\begin{Proposition}
	Let $A' = \K((X))$ or $A' = \K((X^{-1})).$ i.e. 
	Then: 
	\begin{enumerate}
		\item $\partial_\xi A' = 0$
		\item $\partial_\xi$ extends to a derivation on $A'\left[\left[\frac{d}{dX}\right]\right].$
		\item $\partial_\xi$ extends to a derivation on the algebra 
		$$B = \Psi DO(A') = \left\{ \sum_{k \in \Z} a_k \xi^k : a_k \in A'\; , a_k =0, k >> 0 \right\}$$ 
		by the relation $\partial_\xi \xi = 1$.
	\end{enumerate}
\end{Proposition}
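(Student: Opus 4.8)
The plan is to recognise that in each of the three items $\partial_\xi$ coincides with the inner derivation $\mathrm{ad}_{-X}=[-X,\cdot\,]$ attached to the distinguished element $X$, which belongs to $A'$ and therefore also to the larger algebras $A'\bigl[\bigl[\tfrac{d}{dX}\bigr]\bigr]$ and $B=\Psi DO(A')$. An inner derivation of an associative algebra is automatically a derivation, and any subalgebra containing the relevant element is closed under it; so the whole proof reduces to (i) checking that each of these algebras really contains the element $X$ — which is obvious, since $X=X^{1}$ lies in $\K((X))$ and in $\K((X^{-1}))$ — and (ii) computing the action of $[-X,\cdot\,]$ on the generators. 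Item (1) is then immediate: $A'$ is commutative, so $\partial_\xi a = aX-Xa=0$ for every $a\in A'$.

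For item (2) I would start from the operator identity $\tfrac{d}{dX}\,X = X\,\tfrac{d}{dX}+1$, which yields by induction $\bigl(\tfrac{d}{dX}\bigr)^{n}X = X\bigl(\tfrac{d}{dX}\bigr)^{n}+n\bigl(\tfrac{d}{dX}\bigr)^{n-1}$. Using this together with commutativity of $A'$ and the Leibniz rule for $\mathrm{ad}_{-X}$ gives, for $a\in A'$,
$$\partial_\xi\Bigl(a\,\bigl(\tfrac{d}{dX}\bigr)^{n}\Bigr)=\bigl[-X,\ a\,\bigl(\tfrac{d}{dX}\bigr)^{n}\bigr]=n\,a\,\bigl(\tfrac{d}{dX}\bigr)^{n-1}\; ,$$
so that $\partial_\xi$ is exactly the formal derivative with respect to $\tfrac{d}{dX}$; in particular $\partial_\xi\tfrac{d}{dX}=1$ and $\partial_\xi$ strictly decreases the order in $\tfrac{d}{dX}$, hence it is well defined on the completion $A'\bigl[\bigl[\tfrac{d}{dX}\bigr]\bigr]$ and is a derivation there as the restriction of an inner derivation.

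For item (3) the only new computation is the action of $[-X,\cdot\,]$ on a monomial $a\xi^{n}$ (with $a\in A'$, $n\in\Z$) in the twisted product $*$ of $\Psi DO(A')$ built from $\partial=\tfrac{d}{dX}$. Since $X$ has $\xi$-degree $0$, left multiplication gives $X*(a\xi^{n})=(Xa)\xi^{n}$, while $(a\xi^{n})*X=\sum_{\alpha\geq 0}\binom{n}{\alpha}\,a\,\partial^{\alpha}(X)\,\xi^{n-\alpha}=aX\xi^{n}+n\,a\,\xi^{n-1}$ because $\partial X=1$ and $\partial^{\alpha}X=0$ for $\alpha\geq 2$; subtracting and using $Xa=aX$ gives $\partial_\xi(a\xi^{n})=n\,a\,\xi^{n-1}$, i.e. $\partial_\xi$ is the formal $\xi$-derivative on $B$, and in particular $\partial_\xi\xi=1$ as required. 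Being an inner derivation it is automatically a derivation of $(B,*)$. Equivalently, one may bypass the operator picture, \emph{define} $\partial_\xi\bigl(\sum_{k}a_{k}\xi^{k}\bigr):=\sum_{k}k\,a_{k}\xi^{k-1}$, and check the Leibniz rule directly on two monomials, which comes down to the identity $(n-\alpha)\binom{n}{\alpha}=n\binom{n-1}{\alpha}$, valid for all $n\in\Z$.

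None of these computations is hard; the point I would be most careful about is the identification of the abstractly defined Pincherle derivative with the inner derivation $\mathrm{ad}_{-X}$ consistently in the three successively larger algebras — that is, checking that the element ``$X$'' used to form the commutator is the same throughout and that the twisted product of $\Psi DO(A')$ is the one associated with $\partial=\tfrac{d}{dX}$, so that the two normalisations $\partial_\xi X=0$ and $\partial_\xi\xi=1$ are mutually compatible. I would also make sure beforehand that $A'\bigl[\bigl[\tfrac{d}{dX}\bigr]\bigr]$ and $\Psi DO(A')$ are genuine associative algebras (the latter by the Theorem of Section \ref{GKP}, the former as in the discussion preceding the statement), since the derivation assertion presupposes it.
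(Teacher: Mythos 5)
Your proof is correct, and it is far more detailed than the paper's, which simply declares items (1) and (2) straightforward and, for (3), defines $\partial_\xi$ on $\Psi DO(A')$ by $A'$-linearity from the inductively obtained formula $\partial_\xi \xi^{-n} = -n\,\xi^{-n-1}$ (the paper prints $-n\,\xi^{n+1}$, an evident typo) and then asserts that the Leibniz rule can be checked. That direct route is precisely the alternative you sketch at the end of your treatment of item (3), resting on the identity $(n-\alpha)\binom{n}{\alpha}=n\binom{n-1}{\alpha}$ for $n\in\Z$. Your primary route --- realizing $\partial_\xi$ in all three algebras as the inner derivation $[-X,\cdot\,]$ attached to the degree-zero element $X$ --- is genuinely different and arguably more conceptual: the Leibniz rule becomes automatic, and everything reduces to the single computation $(a\xi^{n})*X - X*(a\xi^{n}) = n\,a\,\xi^{n-1}$, which you carry out correctly from the twisted product. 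The price is exactly the compatibility check you flag: one must verify that the $*$-product on $\Psi DO(A')$ is the one built from $\partial=\tfrac{d}{dX}$, so that the normalizations $\partial_\xi X=0$ and $\partial_\xi\xi=1$ coexist; this is indeed the intended reading of the paper (it later denotes the $X$-derivation by $\partial_X$). Your closing caveat about whether $A'\bigl[\bigl[\tfrac{d}{dX}\bigr]\bigr]$ is a genuine associative algebra is also well taken --- with unbounded positive powers of $\tfrac{d}{dX}$ the coefficient of a fixed power in a product is a priori an infinite sum --- and is a point the paper passes over in silence.
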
 
In the sequel, the derivation in the formal $X-$variable will be noted by $\partial_X.$
\begin{proof} (1) and (2) are straightforward from the definitions. In order to prove (3), we get by induction that $\partial_\xi \xi^{-n} = -n \xi^{n+1},$ which defines it by $A'-$linearity, and then we can check that $\partial_\xi$ satisties the Leibnitz rule on $\Psi DO(A').$ 
	\end{proof}
	
We now consider $\Psi DO(B) =\left\{ \sum_{k \in \Z} a_k \zeta^k : a_k \in B, a_k =0, k >> 0 \right\}.$ 
If $P \in \Psi DO(B)$, we can write $P = \sum_{k \in \Z} a_k \zeta^k$ (or ``$P = \sum_{k \in \Z} a_k 
\partial_\xi^{\;k}\,$")
that is, $\zeta$ is the formal variable that corresponds to the derivation $\partial_\xi$ on $B$. 

Thus, we have a ``hybrid'' system, a Lie algebra bracket and a commutator in an associative algebra:
\begin{equation*}\label{t23-pinch}
	\left\{ \begin{array}{l}\frac{du_{-1}}{dt_2} = [X,[X ,u_{-1}] - 2 [ X,u_{-2}]
 \medskip		\\ 
		 	3\frac{du_{-2} }{dt_2} - 2\frac{du_{-1}}{dt_3} = [u_{-1},u_{-2}] +6[X,u_{-1}] u_{-1} +2 [X,[X,[X, u_{-1}]]] - 3[X,[X,u_{-2}]] \, . \end{array} \right. 
\end{equation*}

Once again, Proposition \ref{ivp0} allow us to conclude that the Cauchy problem of this system of equations can be solved in our algebraic setting.

\subsection{Moyal KP equation}
We work in the framework of Hamanaka's paper \cite{H}, see also \cite{Sa,Ta}. We take independent variables $x_k$ 
equipped with non-commutative multiplication
$$
[x_k , x_l] = i\, \theta^{k l}
$$
and we consider the induced multiplication on functions $f(x_1,x_2,\cdots)$ given by
$$
f(x) \star g(x) = f(x)\cdot g(x) + \frac{i}{2} \theta^{k l} \partial_{x_k} f(x) \partial_{x_l} g(x) +
O(\theta^2) \; .
$$
In this context (\ref{0t23-deg1}), (\ref{0t23-deg11}) become 
$$
u_y = \partial^2 u + 4 \partial u_{-2}
$$
and the Moyal KP equation
\begin{equation} \label{general1}
3 u_{yy} = \partial \left( 4 u_t - 3 (\partial u\star u + u\star \partial u) - \partial^3 u \right) + 3 \partial 
\left[ u , \partial^{-1} u_y  \right]\; ,
\end{equation}
where the commutator is $[f , g] = f \star g - g \star f$.
This last equation is exactly Equation (3.17) in \cite{H}. Soliton solutions to (\ref{general1}) 
have been found by Etigoff, Gelfand and Retakh, see \cite{EGR}, and also by Paniak in \cite{Pa}.
We finish this paper with a general result on the initial value problem for (\ref{general1}). We omit its proof. 

\begin{Proposition} \label{ivp-moyal}
Let us fix $\overline{u} \in (A,\star)$. Equation $(\ref{general1})$ has a solution $u(y,t) \in A[[y,t]]$ with 
$u(0,0)= \overline{u}$, and this solution is smooth with respect to the initial condition $\overline{u}$. 
\end{Proposition}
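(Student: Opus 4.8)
The plan is to mimic the proofs of Propositions \ref{ivp-exterior} and \ref{ivp-quaternion}: exhibit Equation (\ref{general1}) as the reduced $(t_2,t_3)$-Zakharov--Shabat equation attached to a suitable differential associative algebra, and then quote Proposition \ref{ivp0} --- hence Theorem \ref{KPcentral} --- for existence and smooth dependence on $\overline{u}$. First I would pin down the algebra: let $A$ be the relevant function algebra carrying the Moyal product $\star$ (a formal deformation, i.e.\ a power series in $\theta$) together with the derivation $\partial$ occurring in (\ref{general1}). Three things must be checked: that $(A,\star)$ is a unital associative $\fK$-algebra (associativity of $\star$ holds order by order in $\theta$, the classical fact underlying the whole Moyal calculus); that $\partial$ is a derivation of $(A,\star)$ (the structure constants $\theta^{kl}$ are constant, so $\star$ is translation invariant); and that $+,\star,\partial$ are smooth in the sense of Section \ref{GKP} (expanding in $\theta$, each coefficient of $f\star g$ is a finite smooth expression in finitely many coefficients of $f$ and $g$). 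This last point is where one invokes the diffeological and Fr\"olicher-space machinery of \cite{ERMR} --- exactly as the authors do for Frobenius algebras in the remarks of Section \ref{GKP} --- so that the matched-pair factorization $\overline{\Psi}=\bar{G}\bowtie\overline{\D}$, Theorem \ref{KPcentral}, and hence Proposition \ref{ivp0}, apply verbatim to $\Psi DO(A[[T]])$.

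With that in place, the reduction is the generic one already written out in Section \ref{GKP}: the passage from the zero-curvature equations (\ref{t23}) to the system (\ref{0t23-deg1})--(\ref{0t23-deg11}) and then to (\ref{general1}) goes through unchanged, the ordinary product being replaced throughout by $\star$ and $[\cdot,\cdot]$ denoting the $\star$-commutator --- this is precisely what is recorded at the head of this subsection, and what makes (\ref{general1}) coincide with Equation (3.17) of \cite{H}. So I would apply Proposition \ref{ivp0} with $N=2$, $(n,m)=(2,3)$, and initial data $u_{-1,0}=(1/2)\overline{u}$ together with any $u_{-2,0}\in A$ (for definiteness the choice $u_{-2,0}=(1/4)(\overline{u}-\partial^2\overline{u})$ used in the proof of Proposition \ref{ivp-quaternion}). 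This produces $u_{-1},u_{-2}\in A[[y,t]]$ solving (\ref{0t23-deg1})--(\ref{0t23-deg11}); eliminating $u_{-2}=(1/4)\partial^{-1}(u_y-\partial^2 u)$ and setting $u:=2u_{-1}$ turns the system into (\ref{general1}), while $u(0,0)=2u_{-1,0}=\overline{u}$, and the smooth dependence of $u$ on $\overline{u}$ is part (3) of Theorem \ref{KPcentral}.

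The main obstacle --- and presumably the reason the authors write that they omit the proof --- is the smoothness point above: one must be sure that the framework of \cite{ERMR} genuinely covers the Moyal-deformed function algebra, whose elements are $\fK[[\theta]]$-valued, so that the existence, uniqueness and smooth-dependence conclusions of Theorem \ref{KPcentral} transfer without modification. A secondary caveat, as in all of these $(t_2,t_3)$ reductions, is the meaning of $\partial^{-1}$ in (\ref{general1}): it requires $\partial$ to admit a section on the chosen function space (on $S^1$ this brings in the usual zero-mean bookkeeping), though at the level of $A[[y,t]]$ this is as harmless as in the exterior-derivation case treated in Proposition \ref{ivp-exterior}.
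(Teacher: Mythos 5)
The paper itself omits the proof of Proposition \ref{ivp-moyal}, but the intended argument is transparently the one given for Proposition \ref{ivp-quaternion}, and your proposal reproduces it faithfully: realize (\ref{general1}) as the reduced $(t_2,t_3)$-ZS equation over the Moyal differential algebra and invoke Proposition \ref{ivp0} with $u_{1,0}=(1/2)\overline{u}$. You also correctly isolate the only points that actually need checking (associativity and smoothness of $\star$ order by order in $\theta$, $\partial$ being a $\star$-derivation, and the availability of $\partial^{-1}$), so this is essentially the same approach as the paper's.
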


\bigskip

\paragraph{\bf Acknowledgements:} V.R. thanks IHES for hospitality; E.G.R.'s research is partially supported by the FONDECYT grant  \#1201894.


\begin{thebibliography}{99}
	\bibitem{D} Dickey, L.A.; \textit{Soliton equations and Hamiltonian systems, second edition} (2003).
	
	\bibitem{ERMR} Eslami Rad, A.; Magnot, J.-P.; Reyes, E. G.; The Cauchy problem of the Kadomtsev-Petviashvili 
	hierarchy with arbitrary coefficient algebra. {\it J. Nonlinear Math. Phys.} 24:sup1 (2017), 103--120.
	\bibitem{EGR} Etigoff, P.; Gelfand, I.; Retakh, V.; Factorization of differential operators, quasideterminants, 
	and nonabelian Toda field equations. \textit{Mathematical Research Letters} \textbf{4} (1997), 413--425.
	\bibitem{H} Hamanaka, M.; Commuting flows and conservation laws for noncommutative Lax hierarchies.
	\textit{J. Math. Phys} 46, 052701 (2005).
	\bibitem{McI2011} Mc Intosh, I.; The quaternionic KP hierarchy and conformally immersed 2-tori on the 4-sphere. \textit{Tohoku Math. J.} \textbf{63} 183-215 (2011)  
\bibitem{KM} Kriegl, A.; Michor, P.W.; \textit{The convenient setting
	for global analysis} Math. surveys and monographs \textbf{53}, American
Mathematical society, Providence, USA. (2000)
		\bibitem{Ku2000} Kuperschmidt, B.A.; \textit{KP or mKP: Mathematics of Lagrangian, Hamiltonian and Integrable Systems.} Math. Surv. Monographs \textbf{78} (2000)
\bibitem{Maj} Majid, S.; Matched pairs of Lie groups associated to solutions of the Yang-Baxter equations. Pacific Journal of Mathematics 141(2) (1990), 311--332.
	\bibitem{MRu2021} Magnot, J-P.; Rubtsov, V.;  On the Kadomtsev-Petviashvili hierarchy in an extended class of formal pseudo-differential operators. 
	{\it Theor. Math. Phys.} {\bf 207}, No. 3, 799-826 (2021).
		\bibitem{MR2016} Magnot, J-P.; Reyes, E. G.; Well-posedness of the Kadomtsev-Petviashvili hierarchy, 
	Mulase factorization, and Fr\"olicher Lie groups.  \textit{Ann. H. Poincar\'e} { \bf 21}, No. 6, 1893-1945 (2020)
	\bibitem{MR} Mendoza, L.; Reyes, E.G.; Massey products, $A^\infty$-algebras, differential equations, and Chekanov homology. {\it J. Nonlinear Math. Phys.} 22 (2015), 342--360.
	\bibitem{MJD2000} Miwa, T.; Jimbo, M.; Date, E.; \textit{Solitons, Differential Equations, Symmetries and 
		Infinite Dimensional Algebras} (translated by Miles Reid) Cambridge University Press (2000) 
	
	\bibitem{M1} Mulase, M.; Complete integrability of the Kadomtsev-Petvishvili equation.
	{\em Advances in Math.} \textbf{54} (1984), 57--66.
	\bibitem{M3} Mulase, M.; Solvability of the super KP equation and a generalization of the Birkhoff
	decomposition. {\em Invent. Math.} \textbf{92} (1988), 1--46.
	\bibitem{Neeb}  Neeb, K-H.; Towards a Lie theory of locally convex groups. \textit{Japanese J. Math.}
		  \textbf{1} (2006), 291--468.
\bibitem{Pa} Paniak, L.D.; Exact Noncommutative KP and KdV Multi-solitons. arXiv:hep-th/0105185v2\,.
	\bibitem{Pin1933} Pincherle, S.; Operatori lineari e coefficienti di fattoriali. { \it Alti Accad. Naz. Lincei, Rend. Cl. Fis. Mat. Nat.} (6) 18, 417-519, 1933
	\bibitem{RKO1973} Rota, G-C.; Kahaner, D.; Odlyzko, A.; On the Foundations of Combinatorial Theory. VIII: Finite Operator Calculus.{\it J. Math. Anal. Appl.} {\bf 42}, 684-760, (1973)
\bibitem{Sa}  Sakakibara, M.;  Factorization methods for noncommutative KP and Toda hierarchy.
             {\em J. Phys. A: Math. Gen.} 37 (2004), L599-L604. 
	\bibitem{SZ2015} Strachan, I.; Zuo, D.;
	Integrability of the Frobenius algebra-valued Kadomtsev-Petviashvili hierarchy.
	{\it J. Math. Phys.} {\bf 56}, No. 11, 113509, 13 p. (2015).
\bibitem{Ta} Takasaki, K.; Nonabelian KP hierarchy with Moyal algebraic coefficients. {\it J. Geom. Phys.} 14 (1994), 332--364.
	\bibitem{SZ2017} Strachan, I.; Zuo, D.;
	Frobenius manifolds and Frobenius algebra-valued integrable systems. 
	\textit{Lett. Math. Phys.} {\bf 107}, No. 6, 997-1026 (2017). 
	\bibitem{Z2017} Zhang, H.;
	Hamiltonian structures and integrability of Frobenius algebra-valued (n,m)th
	KdV hierarchy.	{\it J. Nonlinear Math. Phys.} {\bf 24}, No. 3, 315-327 (2017). 
\end{thebibliography}
\end{document}